\newcommand{\be}{\begin{equation}}
\newcommand{\ee}{\end{equation}}
\newcommand{\bea}{\begin{eqnarray}}
\newcommand{\eea}{\end{eqnarray}}
\newcommand{\ba}{\begin{array}}
\newcommand{\ea}{\end{array}}
\begin{document}

\title*{MIC: Mutual Information based hierarchical Clustering}
\author{Alexander Kraskov and Peter Grassberger}
\institute{Alexander Kraskov \at UCL Institute of Neurology, Queen Square, London, WC1N 3BG, UK, \email{akraskov@ion.ucl.ac.uk}
\and Peter Grassberger \at Department of Physics \& Astronomy and Institute for Biocomplexity \& Informatics, 
University of Calgary, 2500 University Drive NW, Calgary AB, Canada T2N 1N4, \email{pgrassbe@ucalgary.ca}}
%
%
\maketitle

\abstract{Clustering is a concept used in a huge variety of
applications. We review a conceptually very simple algorithm
for hierarchical clustering called in the following the {\it mutual information
clustering} (MIC) algorithm. It uses mutual information (MI) as a similarity measure and exploits its grouping
property: The MI between three objects $X, Y,$ and $Z$ is equal to the sum of the MI between $X$ and $Y$, plus
the MI between $Z$ and the combined object $(XY)$.
We use MIC both in the Shannon (probabilistic) version of information theory, where
the ``objects" are probability distributions represented by random samples, and in the Kolmogorov (algorithmic)
version, where the ``objects" are symbol sequences. We apply our method to the construction of
phylogenetic trees from mitochondrial DNA sequences and we reconstruct the fetal ECG from the output of
independent components analysis (ICA) applied to the ECG of a pregnant woman. 
}

\section{Introduction}
\label{sec:intro}
Classification or organizing of data is a crucial task in all scientific disciplines. It is one of the most
fundamental mechanism of understanding and learning \cite{jain-dubes}. Depending on the problem, classification
can be exclusive or overlapping, supervised or unsupervised. In the following we will be interested only in
exclusive unsupervised classification. This type of classification is usually called clustering or cluster
analysis.

An instance of a clustering problem consists of a set of objects and a set of properties (called characteristic
vector) for each object. The goal of clustering is the separation of objects into groups using only the
characteristic vectors. Indeed, in general only certain aspects of the characteristic vectors will be relevant,
and extracting these relevant features is one field where mutual information (MI) plays a major role
\cite{bottleneck}, but we shall not deal with this here. Cluster analysis organizes data either as a single
grouping of individuals into non-overlapping clusters or as a hierarchy of nested partitions. The former approach
is called partitional clustering (PC), the latter one is hierarchical clustering (HC). One of the main features
of HC methods is the visual impact of the tree or {\it dendrogram} which enables one to see how objects are being 
merged into clusters. From any HC one can obtain a PC by restricting oneself to a ``horizontal" cut through the
dendrogram, while one cannot go in the other direction and obtain a full hierarchy from a single PC. Because of
their wide spread of applications, there are a large variety of different clustering methods in use
\cite{jain-dubes}. In the following we shall only deal with {\it agglomerative} hierarchical clustering, where 
clusters are built by joining first the most obvious objects into pairs, and then continues to join build up 
larger and larger objects. Thus the tree is built by starting at the leaves, and is finished when the main 
branches are finally joined at the root. This is opposed to algorithms where one starts at the root and splits 
clusters up recursively. In either case, the tree obtained in this way can be refined later by restructuring 
it, e.g. using so-called {\it quartet methods} \cite{strimmer,cilibrasi06}.

The crucial point of all clustering algorithms is the choice of a {\it proximity measure}. This is obtained from
the characteristic vectors and can be either an indicator for similarity (i.e. large for similar and small for
dissimilar objects), or dissimilarity. In the latter case it is convenient (but not obligatory) if it satisfies
the standard axioms of a {\it metric} (positivity, symmetry, and triangle inequality). A matrix of all pairwise
proximities is called proximity matrix. Among agglomerative HC methods one should distinguish between those where one uses the
characteristic vectors only at the first level of the hierarchy and derives the proximities between clusters
from the proximities of their constituents, and methods where the proximities are calculated each time from
their characteristic vectors. The latter strategy (which is used also in the present paper) allows of course for
more flexibility but might also be computationally more costly. There exist a large number of different
strategies \cite{jain-dubes,press}, and the choice of the optimal strategy depends on the characteristics of 
the similarities: for ultrametric distances, e.g., the ``natural" method is UPGMA \cite{press}, while 
{\it neighbor joining} is the natural choice when the distance matrix is a metric satisfying the four-point 
condition $d_{ij}+d_{kl} \leq \max(d_{ik}+d_{jl},d_{il}+d_{jk})$ \cite{press}. 

In the present chapter we shall use proximities resp. distances derived from mutual information \cite{cover-thomas}. 
In that case the distances neither form an ultrametric, nor do they satisfy the above four-point condition. Thus neither 
of the two most popular agglomerative clustering methods are favored. But we shall see that the distances 
have another special feature which suggests a different clustering strategy discussed first in \cite{kraskov03,kraskov05}.

Quite generally, the ``objects" to be clustered can be either single (finite) patterns (e.g. DNA sequences) or
random variables, i.e. {\it probability distributions}. In the latter case the data are usually supplied in form
of a statistical sample, and one of the simplest and most widely used similarity measures is the linear
(Pearson) correlation coefficient. But this is not sensitive to nonlinear dependencies which do not manifest
themselves in the covariance and can thus miss important features. This is in contrast to mutual information
(MI) which is also singled out by its information theoretic background \cite{cover-thomas}. Indeed, MI is zero
only if the two random variables are strictly independent.

Another important feature of MI is that it has also an ``algorithmic" cousin, defined within algorithmic
(Kolmogorov) information theory \cite{li-vi} which measures the similarity between individual objects. For a 
comparison between probabilistic and algorithmic information theories, see \cite{grunwald04}. For a
thorough discussion of distance measures based on algorithmic MI and for their application to clustering, see
\cite{li01,li04,cilibrasi05}.

Another feature of MI which is essential for the present application is its {\it grouping property}: The MI
between three objects (distributions) $X, Y,$ and $Z$ is equal to the sum of the MI between $X$ and $Y$, plus
the MI between $Z$ and the combined object (joint distribution) $(XY)$,
\be
   I(X,Y,Z) = I(X,Y) + I((X,Y),Z).                        \label{group}
\ee
Within Shannon information theory this is an exact theorem (see below), while it is true in the algorithmic
version up to the usual logarithmic correction terms \cite{li-vi}. Since $X,Y,$ and $Z$ can be themselves
composite, Eq.(\ref{group}) can be used recursively for a cluster decomposition of MI. This motivates the main
idea of our clustering method: instead of using e.g. centers of masses in order to treat clusters like
individual objects in an approximative way, we treat them exactly like  individual objects when using MI as
proximity measure.

More precisely, we propose the following scheme for clustering $n$ objects with MIC:\\
(1) Compute a proximity matrix based on pairwise mutual informations; assign $n$ clusters such that each cluster
contains exactly one object;\\
(2) find the two closest clusters $i$ and $j$; \\
(3) create a new cluster $(ij)$ by combining $i$ and $j$; \\
(4) delete the lines/columns with indices $i$ and $j$ from the proximity matrix, and add one line/column
containing the proximities between cluster $(ij)$ and all
other clusters. These new proximities are calculated by either treating $(X_i,X_j)$ as a single random variable
   (Shannon version), or by concatenating $X_i$ and $X_j$ (algorithmic version); \\
(5) if the number of clusters is still $>2$, goto (2); else join the two clusters and stop.


In the next section we shall review the pertinent properties of MI, both in the Shannon and in the algorithmic
version. This is applied in Sec.~\ref{sec:phylo} to construct a phylogenetic tree using mitochondrial DNA and 
in Sec.~\ref{sec:ecg} to
cluster the output channels of an independent component analysis (ICA) of an electrocardiogram (ECG) of a
pregnant woman, and to reconstruct from this the maternal and fetal ECGs. We finish with our conclusions in
Sec.~\ref{sec:concl}.

\section{Mutual Information}
\label{sec:mi}
\subsection{Shannon Theory}
\label{subsec:shan}
Assume that one has two random variables $X$ and $Y$. If they are discrete, we write $p_i(X) = {\rm
prob}(X=x_i)$, $p_i(Y) = {\rm prob}(Y=x_i)$, and $p_{ij} = {\rm prob}(X=x_i,Y=y_i)$ for the marginal and joint
distribution. Otherwise (and if they have finite densities) we denote the densities by $\mu_X(x),\mu_Y(y)$, and
$\mu(x,y)$. Entropies are defined for the discrete case as usual by $H(X) = - \sum_ip_i(X) \log p_i(X)$, $H(Y) =
- \sum_ip_i(Y) \log p_i(Y)$, and $H(X,Y)=-\sum_{i,j} p_{ij} \log p_{ij}$. Conditional entropies are defined as
$H(X|Y) = H(X,Y)-H(Y) = -\sum_{i,j} p_{ij} \log p_{i|j}$. The base of the logarithm determines the units in
which information is measured. In particular, taking base two leads to information measured in bits. In the
following, we always will use natural logarithms. The MI between $X$ and $Y$ is finally defined as
\bea
   I(X,Y) &=& H(X)+H(Y)-H(X,Y) \nonumber  \\
          &=& \sum_{i,j} p_{ij}\;\log{p_{ij}\over p_i(X)p_j(Y)}.
\eea
It can be shown to be non-negative, and is zero only when $X$ and $Y$ are strictly independent. For $n$ random
variables $X_1,X_2\ldots X_n$, the MI is defined as
\be
   I(X_1,\ldots, X_n) = \sum_{k=1}^n H(X_k) - H(X_1,\ldots, X_n).
\ee 
This quantity is often referred to as (generalized) redundancy, in order to distinguish it from different
``mutual informations" which are constructed analogously to higher order cumulants \cite{cover-thomas}, 
but we shall not follow this
usage. Eq.(\ref{group}) can be checked easily,
\bea
   I(X,Y,Z) &=& H(X)+H(Y)+H(Z) - H(X,Y,Z)  \nonumber  \\
          &=& \sum_{i,j,k} p_{ijk}\;\log{p_{ijk}\over p_i(X)p_j(Y)p_k(Z)} \nonumber  \\
          &=& \sum_{i,j,k} p_{ijk}\;[\log{p_{ij}(XY)\over p_i(X)p_j(Y)} + \log{p_{ijk}\over p_{ij}(XY)p_k(Z)}] \nonumber  \\
          &=& I(X,Y) + I((X,Y),Z),
\eea
together with its generalization to arbitrary groupings. It means
that MI can be {\it decomposed into hierarchical levels}. By iterating it, one can decompose $I(X_1\ldots X_n)$
for any $n>2$ and for any partitioning of the set $(X_1\ldots X_n)$ into the MIs between elements within one
cluster and MIs between clusters.

For continuous variables one first introduces some binning (`coarse-graining'), and applies the above to the
binned variables. If $x$ is a vector with dimension $m$ and each bin has Lebesgue measure $\Delta$, then $p_i(X)
\approx \mu_X(x)\Delta^m$ with $x$ chosen suitably in bin $i$, and \footnote{Notice that we have here assumed
that densities really exists. If not e.g. if $X$ lives on a fractal set), then $m$ is to be replaced by the
Hausdorff dimension of the measure $\mu$.}
\be
   H_{\rm bin}(X) \approx \tilde{H}(X) - m \log \Delta
\ee
where the {\it differential entropy} is given by
\be
   \tilde{H}(X) = -\int dx \;\mu_X(x) \log \mu_X(x).
\ee
Notice that $H_{\rm bin}(X)$ is a true (average) information and is thus non-negative, but $\tilde{H}(X)$ is not
an information and can be negative. Also, $\tilde{H}(X)$ is not invariant under homeomorphisms $x\to \phi(x)$.

Joint entropies, conditional entropies, and MI are defined as above, with sums replaced by integrals. Like
$\tilde{H}(X)$, joint and conditional entropies are neither positive (semi-)definite nor invariant. But MI,
defined as
\be
   I(X,Y) = \int\!\!\!\int dx dy \;\mu_{XY}(x,y) \;\log{\mu_{XY}(x,y)\over \mu_X(x)\mu_Y(y)}\;,
   \label{mi}
\ee
is non-negative and invariant under $x\to \phi(x)$ and $y\to \psi(y)$. It is (the limit of) a true information,
\be
   I(X,Y) = \lim_{\Delta\to 0} [H_{\rm bin}(X)+H_{\rm bin}(Y)-H_{\rm bin}(X,Y)].
\ee

\subsection{Estimating mutual Shannon information}

In applications, one usually has the data available in form of a statistical sample. To estimate $I(X,Y)$ one
starts from $N$ bivariate measurements $(x_i,y_i), \, i=1,\ldots N$ which are assumed to be iid (independent
identically distributed) realizations. For discrete variables, estimating the probabilities $p_i$, $p_{ij}$, etc.,
is straightforward: $p_i$ is just approximated by the ratio $n_i/N$, where $n_i$ is the number of outcomes 
$X=x_i$. This approximation gives rise both to a bias in the estimate of entropies, and to statistical 
fluctuations. The bias can be largely avoided by more sophisticated methods (see e.g. \cite{grass03}), but 
we shall not go into details.

For continuous variables, the situation is worse. There exist numerous strategies to estimate $I(X,Y)$. The 
most popular include discretization by partitioning the ranges of $X$ and $Y$ into finite intervals 
\cite{darbellay-vajda}, ``soft" or ``fuzzy" partitioning using B-splines \cite{daub}, and kernel density 
estimators \cite{moon95}. We shall use in the following the MI estimators based on $k$-nearest neighbors 
statistics proposed in Ref.~\cite{kraskov04}, and we refer to this paper for a comparison with alternative 
methods.

\subsection{$k$-nearest neighbors estimators}

There exists an extensive literature
on nearest neighbors based estimators for the simple Shannon entropy 
\be
   H(X) = -\int dx \mu(x) \log \mu(x),
\ee
dating back at least to \cite{dobrushin,vasicek}. But it 
seems that these methods have never been used for estimating MI. In 
\cite{vasicek,dude-meul,es,ebrahimi,correa,tsyb-meul,wiecz-grze} it is 
assumed that $x$ is one-dimensional, so that the $x_i$ can be ordered by
magnitude and $x_{i+1}-x_i \to 0$ for $N\to \infty$. In the simplest case, 
the estimator based only on these distances is 
\be 
   H(X) \approx -{1\over N-1} \sum_{i=1}^{N-1} \log(x_{i+1}-x_i) - \psi(1) +
        \psi(N)\;.
   \label{vasi}
\ee
Here, $\psi(x)$ is the digamma function, $\psi(x) = \Gamma(x)^{-1} d\Gamma(x)/dx$.
It satisfies the recursion $\psi(x+1) = \psi(x)+1/x$ and $\psi(1) = -C$ where 
$C = 0.5772156\ldots$ is the Euler-Mascheroni constant. For large $x$, 
$\psi(x) \approx \log x -1/2x$. Similar formulas exist which use $x_{i+k}-x_i$
instead of $x_{i+1}-x_i$, for any integer $k<N$.

Although Eq.(\ref{vasi}) and its generalizations to $k>1$ seem to give the 
best estimators of $H(X)$, they cannot be used for MI because it is not 
obvious how to generalize them to higher dimensions. Here we have to use a 
slightly different approach, due to \cite{koza-leon}.

Assume some metrics to be given on the spaces spanned by $X, Y$ and $Z=(X,Y)$. 
In the following we shall use always the maximum norm in the joint space, 
i.e. 
\be
   || z-z'|| = \max\{||x-x'||,||y-y'||\},
\ee
independently of the norms used for $||x-x'||$ and $||y-y'||$ (they need not 
be the same, as these spaces could be completely different).
We can then rank, for each point $z_i = (x_i,y_i)$, its neighbors by distance
$d_{i,j}= || z_i-z_j||$: $d_{i,j_1} \leq d_{i,j_2} \leq d_{i,j_3} \leq \ldots$.
Similar rankings can be done in the subspaces $X$ and $Y$. The basic idea of
\cite{koza-leon} is to estimate $H(X)$ from the average
distance to the $k$-nearest neighbor, averaged over all $x_i$. Mutual information could be obtained by estimating in this
way $H(X)$, $H(Y)$ and $H(X,Y)$ separately and using
\be
   I(X,Y) = H(X)+H(Y)-H(X,Y)\;.
\ee
But using the same $k$ in both the marginal and joint spaces would mean that 
the typical distances to the $k-$th neighbors are much larger in the joint ($Z$) space 
than in the marginal spaces. This would mean that any errors made in the individual estimates would 
presumably not cancel, since they depend primarily on these distances. 

Therefore we proceed differently in \cite{kraskov04}. We first choose a value of $k$, which gives the 
number of neighbors in the joint space. From this we obtain for each point 
$z_i = (x_i,y_i)$ a length scale $\epsilon_i$, and then we count the number of neighbors within 
this distance for each of the marginal points $x_i$ and $y_i$.

Indeed, for each $k$ two different versions of this algorithm were given in \cite{kraskov04}. 
In the first, neighborhoods in the joint space are chosen as (hyper-)squares, so that the the 
length scales $\epsilon_i$ are the same in $x$ and in $y$. In the second version, the size 
of the neighborhood is further optimized by taking them to be (hyper-)rectangles, so that 
$\epsilon_{i,x} \neq \epsilon_{i,y}$.
Also, the analogous estimators for the generalized redundancies $I(X_1,X_2, \ldots X_m)$ were 
given there. Both variants give very similar results. For details see Ref.~\cite{kraskov04}. 

Compared to other estimators, these estimators are of similar speed (they are faster than 
methods based on kernel density estimators, slower than the B-spline estimator of \cite{daub},
and of comparable speed to the sophisticated adaptive partitioning method of \cite{darbellay-vajda}.
They are rather robust (i.e., they are insensitive to outliers). Their superiority becomes 
most evident in higher dimensions, where any method based on partitioning fails. 
Any estimator has statistical errors (due to sample-to-sample fluctuations) and some bias.
Statistical errors decrease with $k$, while the bias increases in general with $k$. Thus it is advised 
to take large $k$ (up to $k/N \approx 0.1$) when the bias is not expected be a problem, 
and to use small $k$ ($k=1$, in the extreme), if a small bias is more important than small 
statistical sample-to-sample fluctuations.  

\begin{figure}
  \begin{center}
    \psfig{file=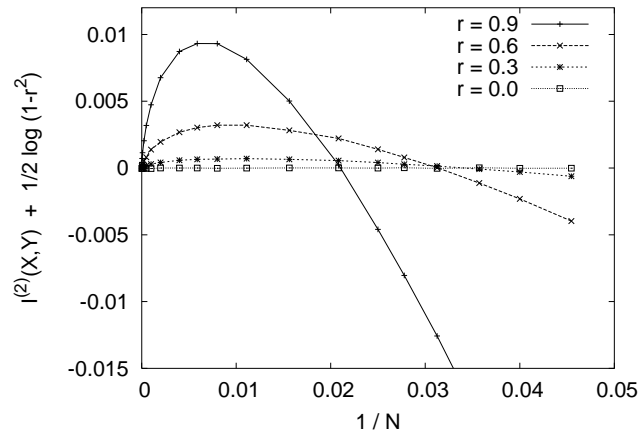,width=6.0cm,angle=270}
 \caption{Averages of the estimates of ${\hat I}^{(2)}(X,Y) - I_{\rm exact}(X,Y)$ for Gaussians with
   unit variance and covariances $r=0.9, 0.6, 0.3$, and 0.0 (from top to bottom),
   plotted against $1/N$. In all cases $k=1$. The number of realizations over which this is averaged is 
   $>2\times 10^6$ for $N<=1000$, and decreases to $\approx 10^5$ for $N=40,000$.
   Error bars are smaller than the sizes of the symbols.}
 \label{fig:I2gauss}
 \end{center}
 \end{figure}

A systematic study of the performance of these estimators and 
comparison with previous algorithms is given in Ref.~\cite{kraskov04}. Here we will discuss
just one further feature of the estimators proposed in \cite{kraskov04}: They seem to be 
strictly unbiased whenever the true mutual information is zero. This makes them particularly 
useful for a test for independence. We have no analytic proof for this, but very good 
numerical evidence. As an example, we show in Fig.~\ref{fig:I2gauss} results for Gaussian 
distributions. More precisely, we drew a large number (typically $10^6$ and more) of $N-$tuples 
of vectors $(x,y)$ from a bivariate Gaussian with fixed covariance $r$, and estimated 
$I(X,Y)$ with $k=1$ by means of the second variant ${\hat I}^{(2)}(X,Y)$ of our estimator. The 
averages over all tuples of ${\hat I}^{(2)}(X,Y) - I_{\rm Gauss}(X,Y)$ is plotted in Fig.~\ref{fig:I2gauss}
against $1/N$. Here, 
\be
   I_{\rm Gauss}(X,Y) = -{1\over 2} \log(1-r^2)\;.
   \label{gauss-MI}
\ee
is the exact MI for Gaussians with covariance $r$ \cite{darbellay-vajda}.

The most conspicuous feature seen in Fig.~\ref{fig:I2gauss}, apart from the fact that indeed
$I^{(2)}(X,Y) - I_{\rm Gauss}(X,Y) \to 0$ for $N\to \infty$, is that the 
systematic error is compatible with zero for $r=0$, i.e. when the two Gaussians
are uncorrelated. We checked this with high statistics runs for many different 
values of $k$ and $N$ (a priori one should expect that systematic errors 
become large for very small $N$), and for many more distributions (exponential,
uniform, ...). In all cases we found that both variants ${\hat I}^{(1)}(X,Y)$ and ${\hat I}^{(2)}(X,Y)$ 
become exact for independent variables.

\subsection{Algorithmic Information Theory}
\label{subsec:algo}
In contrast to Shannon theory where the basic objects are random variables and entropies are {\it average}
informations, algorithmic information theory deals with individual symbol strings and with the actual
information needed to specify them. To ``specify" a sequence $X$ means here to give the necessary input to a
universal computer $U$, such that $U$ prints $X$ on its output and stops. The analogon to entropy, called here
usually the {\it complexity} $K(X)$ of $X$, is the minimal length of any input which leads to the output $X$, for
fixed $U$. It depends on $U$, but it can be shown that this dependence is weak and can be neglected in the limit
when $K(X)$ is large \cite{li-vi,cover-thomas}.

Let us denote the concatenation of two strings $X$ and $Y$ as $XY$. Its complexity is $K(XY)$. It is intuitively
clear that $K(XY)$ should be larger than $K(X)$ but cannot be larger than the sum $K(X)+K(Y)$. Even if $X$ and 
$Y$ are completely unrelated, so that one cannot learn anything about $Y$ by knowing $X$, $K(XY)$ is slightly 
smaller that $K(X)+K(Y)$. The reason is simply that the information needed to reconstruct $XY$ (which is measured 
by $K(XY)$) does not include the information about where $X$ ends and $Y$ starts (which is included of course 
in $K(X)+K(Y)$). The latter information increases logarithmically with the total length $N$ of the sequence $XY$.
It is one of the sources for ubiquitous terms of order $\log(N)$ which become irrelevant in the limit $N\to\infty$, 
but make rigorous theorems in algorithmic information theory look unpleasant.

Up to such terms, $K(X)$ satisfies the following seemingly obvious but non-trivial properties \cite{cilibrasi05}:
\begin{enumerate}
\item Idempotency: $K(XX)=K(X)$
\item Monotonicity: $K(XY)\geq K(X)$
\item Symmetry: $K(XY)=K(YX)$
\item Distributivity: $K(XY)+K(Z) \leq K(XZ)+K(YZ)$
\end{enumerate}

 Finally, one
expects that $K(X|Y)$, defined as the minimal length of a program printing $X$ when $Y$ is furnished as
auxiliary input, is related to $K(XY)-K(Y)$. Indeed, one can show \cite{li-vi} (again within correction terms
which become irrelevant asymptotically) that
\be
   0 \leq K(X|Y) \simeq K(XY)-K(Y) \leq K(X).
\ee
Notice the close similarity with Shannon entropy.

The algorithmic information in $Y$ about $X$ is finally defined as
\be
   I_{\rm alg}(X,Y) = K(X) - K(X|Y) \simeq K(X)+K(Y)-K(XY).
\ee
Within the same additive correction terms, one shows that it is symmetric, $I_{\rm alg}(X,Y) = I_{\rm
alg}(Y,X)$, and can thus serve as an analogon to mutual information.

From Turing's halting theorem it follows that $K(X)$ is in general not computable. But one can easily give upper
bounds. Indeed, the length of any input which produces $X$ (e.g. by spelling it out verbatim) is an upper bound.
Improved upper bounds are provided by any file compression algorithm such as gnuzip or UNIX ``compress". Good
compression algorithms will give good approximations to $K(X)$, and algorithms whose performance does not depend
on the input file length (in particular since they do not segment the file during compression) will be crucial
for the following. As argued in \cite{cilibrasi05}, it is not necessary that the compression algorithm gives 
estimates of $K(X)$ which are close to the true values, as long as it satisfies points 1-4 above.
Such a compression algorithm is called {\it normal} in 
\cite{cilibrasi05}. While the old UNIX ``compress" algorithm is not normal (idempotency is badly violated), 
most modern compression algorithms (see \cite{lpaq} for an exhaustive overview) are close to normal. 

Before leaving this subsection, we should mention that $K(X|Y)$ can also be estimated in a completely 
different way, by aligning $X$ and $Y$ \cite{penner}. If $X$ and $Y$ are sufficiently close so that a global
alignment makes sense, one can form from such an alignment a {\it translation string} $T_{Y\to X}$ which 
is such that $Y$ and $T_{Y\to X}$ together determine $X$ uniquely. Then $K(T_{Y\to X})$ is an upper bound
to $K(X|Y)$, and can be estimated by compressing $T_{Y\to X}$. In \cite{penner} this was applied among 
others to complete mitochondrial genomes of vertebrates. The estimates obtained with state of the art global 
alignment and text compression algorithms (MAVID \cite{mavid} and lpaq1 \cite{lpaq}) were surprisingly close 
to those obtained by the compression-and-concatenation method with the gene compression algorithm XM \cite{XM}.
The latter seems at present by far the best algorithm for compressing DNA sequences. Estimates for $K(X|Y)$
obtained with other algorithms such as gencompress \cite{GenComp} were typically smaller by nearly an order of 
magnitude.

\subsection{MI-Based Distance Measures}
\label{subsec:midist}

Mutual information itself is a similarity measure in the sense that small values imply large ``distances" in a
loose sense. But it would be useful to modify it such that the resulting quantity is a metric in the strict
sense, i.e. satisfies the triangle inequality. Indeed, the first such metric is well known within Shannon 
theory \cite{cover-thomas}: The quantity
\be
   d(X,Y)=H(X|Y)+H(Y|X)=H(X,Y)-I(X,Y)                   \label{d}
\ee
satisfies the triangle inequality, in addition to being non-negative and symmetric and to satisfying $d(X,X)=0$.
The analogous statement in algorithmic information theory, with $H(X,Y)$ and $I(X,Y)$ replaced by $K(XY)$ and
$I_{\rm alg}(X,Y)$, was proven in \cite{li01,li04}.

But $d(X,Y)$ is not appropriate for our purposes. Since we want to compare the proximity between two single
objects and that between two clusters containing maybe many objects, we would like the distance measure to be
unbiased by the sizes of the clusters. As argued forcefully in \cite{li01,li04}, this is not true for $I_{\rm
alg}(X,Y)$, and for the same reasons it is not true for $I(X,Y)$ or $d(X,Y)$ either: A mutual information of
thousand bits should be considered as large, if $X$ and $Y$ themselves are just thousand bits long, but it
should be considered as very small, if $X$ and $Y$ would each be huge, say one billion bits.

As shown in \cite{li01,li04} within the algorithmic framework, one can form two different distance measures
from MI which define metrics and which are normalized. As shown in \cite{kraskov03} (see also \cite{yu07}),
the proofs of \cite{li01,li04} can be transposed almost verbatim to the Shannon case. In the following we 
quote only the latter. 

\begin{theorem} The quantity
\be
   D(X,Y) = 1 - \frac{I(X,Y)}{H(X,Y)} = \frac{d(X,Y)}{H(X,Y)}                 \label{eq:dist}
\ee
is a metric, with $D(X,X)=0$ and $D(X,Y)\leq 1$ for all pairs $(X,Y)$.
\end{theorem}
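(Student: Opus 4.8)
The plan is to verify the three metric axioms for $D(X,Y)$ one at a time, treating the normalization $H(X,Y)$ as the only real source of difficulty. Non-negativity and the bound $D(X,Y)\le 1$ are immediate: since $I(X,Y)\ge 0$ and $0\le I(X,Y)\le H(X,Y)$ (the latter because $d(X,Y)=H(X,Y)-I(X,Y)=H(X|Y)+H(Y|X)\ge 0$), the ratio $I(X,Y)/H(X,Y)$ lies in $[0,1]$. Likewise $D(X,X)=0$ follows from $I(X,X)=H(X,X)=H(X)$. Symmetry is inherited directly from the symmetry of $I(X,Y)$ and of $H(X,Y)$. So the entire content of the theorem is the triangle inequality $D(X,Z)\le D(X,Y)+D(Y,Z)$.

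For the triangle inequality I would follow the strategy of \cite{li01,li04}, transposed to the Shannon setting as indicated in the text. First I would rewrite $D$ in the form $D(X,Y)=\dfrac{H(X|Y)+H(Y|X)}{H(X,Y)}=\dfrac{\max\{H(X|Y),H(Y|X)\}+\ldots}{\ldots}$ — more precisely, the key auxiliary fact is the elementary identity/inequality $H(X,Y)=\max\{H(X),H(Y)\}+\min\{H(X|Y),H(Y|X)\}$ together with the monotonicity $H(X,Y)\ge H(X)$, $H(X,Y)\ge H(Y)$. The main lemma to establish is the ``unnormalized'' triangle inequality for conditional entropies, namely
\be
  H(X|Z)\le H(X|Y)+H(Y|Z),
\ee
which is standard (it follows from $H(X|Z)\le H(XY|Z)=H(Y|Z)+H(X|YZ)\le H(Y|Z)+H(X|Y)$, using that conditioning reduces entropy). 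Adding this to the same inequality with $X$ and $Z$ swapped gives $d(X,Z)\le d(X,Y)+d(Y,Z)$, i.e. the triangle inequality for the unnormalized metric $d$.

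The hard part is then to pass from the triangle inequality for $d$ to the one for $D=d/H(X,Y)$, because the denominators differ across the three terms. Here I would adapt the combinatorial argument of \cite{li01}: bound each unnormalized conditional entropy appearing in $d(X,Z)$ by its normalized counterpart times the appropriate (larger) joint entropy, using that $H(X,Z)\ge H(X,Y)$ fails in general — so instead one works term by term, writing $H(X|Z)\le H(X|Y)+H(Y|Z)$ and dividing by $H(X,Z)$, then estimating $H(X,Z)\ge\max\{H(X),H(Z)\}$ from below and the numerator contributions using $H(X|Y)\le H(X,Y)$ and $H(Y|Z)\le H(Y,Z)$, and finally checking the resulting fractions sum correctly. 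The cleanest route is probably: show $\frac{H(X|Z)}{H(X,Z)}\le \frac{H(X|Y)}{H(X,Y)}+\frac{H(Y|Z)}{H(Y,Z)}$ by the case analysis on which of $H(X),H(Y),H(Z)$ is largest (this is exactly where the $\max$/$\min$ decomposition of the joint entropy is used), add the symmetric inequality obtained by exchanging $X\leftrightarrow Z$, and observe that the left-hand side sums to $D(X,Z)$ while the right-hand side sums to $D(X,Y)+D(Y,Z)$. I expect the case analysis in this last step — and the need to handle the degenerate situation $H(X,Y)=0$, where one sets $D=0$ by convention — to be the only genuinely delicate point; everything else is bookkeeping with entropy identities already recalled in Section~\ref{subsec:shan}.
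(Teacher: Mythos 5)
The paper itself does not prove this theorem; it defers to \cite{li01,li04} and \cite{kraskov03}, and your route is essentially the one taken in those references, so the comparison is really with them. Your reduction is sound: the easy axioms are handled correctly, $H(X|Z)\le H(X|Y)+H(Y|Z)$ is the right unnormalized lemma with the right proof, and splitting the normalized triangle inequality into the one-sided inequality
\be
   \frac{H(X|Z)}{H(X,Z)}\;\le\;\frac{H(X|Y)}{H(X,Y)}+\frac{H(Y|Z)}{H(Y,Z)}
\ee
plus its mirror image under $X\leftrightarrow Z$ is exactly the right key lemma. The one genuine gap is that you assert this inequality without proving it, and the mechanism you point to --- a case analysis on which of $H(X),H(Y),H(Z)$ is largest, via $H(X,Y)=\max\{H(X),H(Y)\}+\min\{H(X|Y),H(Y|X)\}$ --- is a red herring: that decomposition is the tool one needs for $D'$, not for $D$. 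The inequality actually closes in three lines with no case analysis. Write $a=H(X|Y)$ and $b=H(Y|Z)$, so the three denominators are $H(X,Z)=H(Z)+H(X|Z)$, $H(X,Y)=H(Y)+a$, and $H(Y,Z)=H(Z)+b$. Since $t\mapsto t/(c+t)$ is increasing and $H(X|Z)\le a+b$, the left-hand side is at most $\frac{a+b}{H(Z)+a+b}=\frac{a}{H(Z)+a+b}+\frac{b}{H(Z)+a+b}$; the second fraction is $\le\frac{b}{H(Z)+b}$ because $a\ge 0$, and the first is $\le\frac{a}{H(Y)+a}$ because $H(Y)\le H(Y,Z)=H(Z)+b$. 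Adding the mirrored inequality in $X\leftrightarrow Z$ gives $D(X,Z)\le D(X,Y)+D(Y,Z)$. With that step filled in (and the convention $D=0$ when $H(X,Y)=0$, which you already note), your proof is complete and coincides with the cited one.
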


\begin{theorem}
 The quantity
\bea
   D'(X,Y) & = & 1 - \frac{I(X,Y)}{\max\{H(X),H(Y)\}}    \nonumber \\
     & = & \frac{\max\{H(X|Y),H(Y|X)\}}{\max\{H(X),H(Y)\}}                \label{eq:dist2}
\eea
is also a metric, also with $D'(X,X)=0$ and $D'(X,Y)\leq 1$ for all pairs $(X,Y)$. It is sharper than $D$, i.e.
$D'(X,Y) \leq D(X,Y)$.
\end{theorem}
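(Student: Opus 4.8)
The plan is to verify the three metric axioms for $D'$, namely $D'(X,X)=0$, symmetry, and the triangle inequality, together with the normalization $D'(X,Y)\le 1$ and the sharpness claim $D'(X,Y)\le D(X,Y)$; the sharpness claim I would dispatch first, since it is the cheapest. Writing both distances in the form $d(X,Y)/\text{(denominator)}$ with $d(X,Y)=H(X,Y)-I(X,Y)=\max\{H(X|Y),H(Y|X)\}+\min\{H(X|Y),H(Y|X)\}$, and noting that $H(X,Y)=\max\{H(X),H(Y)\}+\min\{H(X|Y),H(Y|X)\}\ge\max\{H(X),H(Y)\}$, we get $D'\le D$ immediately by comparing numerators over a larger versus smaller positive denominator. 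The identity $D'(X,X)=0$ is clear since $I(X,X)=H(X)=\max\{H(X),H(X)\}$, and symmetry is manifest from the symmetric second expression $\max\{H(X|Y),H(Y|X)\}/\max\{H(X),H(Y)\}$. The bound $D'(X,Y)\le 1$ follows from $I(X,Y)\ge 0$.

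The substantive part is the triangle inequality $D'(X,Z)\le D'(X,Y)+D'(Y,Z)$. Following the remark in the text that the algorithmic proof of \cite{li04} transposes almost verbatim, I would work with the numerator form $D'(X,Y)=\max\{H(X|Y),H(Y|X)\}/\max\{H(X),H(Y)\}$. Without loss of generality, using symmetry, I would assume an ordering of the three entropies, say $H(X)\le H(Y)\le H(Z)$ (or handle the case $H(Y)$ largest separately). The key Shannon-theoretic inputs are the elementary inequalities $H(X|Z)\le H(X|Y)+H(Y|Z)$ (a consequence of $H(X,Y|Z)=H(X|Y,Z)+H(Y|Z)\ge H(X|Z)$ rearranged, i.e. "conditioning reduces entropy" applied to $H(X|Y,Z)\le H(X|Y)$), and similarly $H(Z|X)\le H(Z|Y)+H(Y|X)$. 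Then one bounds the numerator $\max\{H(X|Z),H(Z|X)\}$ above by $\max\{H(X|Y)+H(Y|Z),\,H(Z|Y)+H(Y|X)\}\le\max\{H(X|Y),H(Y|X)\}+\max\{H(Y|Z),H(Z|Y)\}$, and divides by the denominators, which is where care is needed: one has to show that replacing the denominator $\max\{H(X),H(Z)\}=H(Z)$ by the smaller denominators $\max\{H(X),H(Y)\}$ and $\max\{H(Y),H(Z)\}=H(Z)$ appearing on the right only increases the fractions. That step uses that a numerator term divided by a smaller positive number gets larger, but one must track which of the two "max" branches is active in each term.

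The main obstacle, as in \cite{li04}, is precisely this bookkeeping with the $\max$ in the denominators when the three entropies are not comonotone with the conditional entropies — the clean inequality $\frac{a+b}{\max\{p,r\}}\le\frac{a}{\max\{p,q\}}+\frac{b}{\max\{q,r\}}$ (with $p=H(X)$, $q=H(Y)$, $r=H(Z)$ and $a,b$ the respective conditional-entropy maxima) is not a pure arithmetic triviality and requires using the structural bound $H(X|Y)\le H(X)$, i.e. that each conditional-entropy numerator is itself dominated by the corresponding entropy, to control the cross terms. I would isolate this as a small lemma on real numbers with side constraints ($0\le a\le p$, etc., plus $a\le$ the larger of the two relevant entropies), prove it by case analysis on which entropy is largest, and then the triangle inequality for $D'$ drops out. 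Degenerate cases where some $H(\cdot)=0$ (so a denominator vanishes) correspond to a constant random variable and should be noted separately, with the convention that $D'$ is then $0$ against a copy of itself and the inequalities hold trivially.
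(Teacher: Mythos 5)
The paper itself does not prove this theorem; it states that the proofs of Li et al.\ \cite{li01,li04} transpose verbatim to the Shannon case and refers to \cite{kraskov03}, so your sketch has to be judged on its own merits. The easy parts are fine: $D'(X,X)=0$, symmetry, $D'\leq 1$, and $D'\leq D$ all hold (for the last one the cleanest route is $D-D'=I(X,Y)\bigl[1/\max\{H(X),H(Y)\}-1/H(X,Y)\bigr]\geq 0$; your phrasing ``same numerator over a larger denominator'' only works in the $1-I/(\cdot)$ form, not in the $d(X,Y)/(\cdot)$ form you set up, since there the numerators differ too).

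The genuine gap is in the triangle inequality. You correctly write the directed inequalities $H(X|Z)\leq H(X|Y)+H(Y|Z)$ and $H(Z|X)\leq H(Z|Y)+H(Y|X)$, but you then immediately pass to the max form $\max\{H(X|Z),H(Z|X)\}\leq\max\{H(X|Y),H(Y|X)\}+\max\{H(Y|Z),H(Z|Y)\}$ and propose to finish with an arithmetic lemma $\frac{a+b}{\max\{p,r\}}\leq\frac{a}{\max\{p,q\}}+\frac{b}{\max\{q,r\}}$ under side constraints of the type ``each conditional entropy is at most the corresponding entropy.'' That lemma is false under those constraints: take $p=r=0.9$, $q=1$, $a=b=0.45$, $c=\max\{H(X|Z),H(Z|X)\}=0.9$; then $c\leq a+b$, $c\leq\max\{p,r\}$, and all bounds $H(\cdot|\cdot)\leq H(\cdot)$ can be made consistent, yet $c/\max\{p,r\}=1>0.9=a/q+b/q$. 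This numerical configuration is only excluded because it violates the \emph{directed} inequality $H(Z|X)\leq H(Z|Y)+H(Y|X)$ (equivalently $I(X;Y)+I(Y;Z)\leq H(Y)+I(X;Z)$), and precisely this information is thrown away when you replace the directed bounds by their maxima. The actual proof (as in \cite{li04}) needs a sharper treatment of the one hard case, namely $H(Y)>\max\{H(X),H(Z)\}$ (in the easy case $H(Y)\leq\max\{H(X),H(Z)\}$ your argument does go through, since both right-hand denominators are then at most $\max\{H(X),H(Z)\}$). In the hard case one writes the right-hand side as $\bigl[H(Y|X)+H(Y|Z)\bigr]/H(Y)$ and splits again: if $H(Y|X)+H(Y|Z)\geq H(Y)$ the right-hand side is already $\geq 1\geq D'(X,Z)$; otherwise one keeps the directed bound in the form $H(Z|X)\leq H(Y|X)+H(Y|Z)+H(Z)-H(Y)$ (using $H(Z|Y)=H(Y|Z)+H(Z)-H(Y)$) and a short mediant-type computation, valid exactly when $H(Y|X)+H(Y|Z)\leq H(Y)$, closes the argument. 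Both of these ingredients --- the ``escape via $D'\leq 1$'' and the retained asymmetry $H(Z|Y)-H(Y|Z)=H(Z)-H(Y)$ --- are absent from your sketch and cannot be recovered from the max-form inequality alone.
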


Apart from scaling correctly with the total information, in contrast to $d(X,Y)$, the algorithmic analogs to
$D(X,Y)$ and $D'(X,Y)$ are also {\it universal} \cite{li04}. Essentially this means that if $X\approx Y$
according to any non-trivial distance measure, then $X\approx Y$ also according to $D$, and even more so (by
a factor up to 2) according to $D'$. In contrast to the other properties of $D$ and $D'$, this is not easy to
carry over from algorithmic to Shannon theory. The proof in Ref.~\cite{li04} depends on $X$ and $Y$ being
discrete, which is obviously not true for probability distributions. Based on the universality argument, it was
argued in \cite{li04} that $D'$ should be superior to $D$, but the numerical studies shown in that reference did
not show a clear difference between them. In addition, $D$ is singled out by a further property:

\begin{theorem} Let $X$ and $Y$ be two strings, and let $W$ be the concatenation $W=XY$. Then $W$ is a weighted 
"mid point" in the sense that 
\be
   D(X,W) + D(W,Y) = D(X,Y), \qquad D(X,W) : D(Y,W) = H(Y|X) : H(X|Y).
\ee
\end{theorem}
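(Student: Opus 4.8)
The plan is to reduce the whole statement to one elementary observation: the concatenation (resp.\ joint object) $W=XY$ \emph{contains} $X$ as a part, and likewise contains $Y$, so that forming the combined object of $X$ with $W$ adds nothing. Working in the Shannon reading, where $W$ stands for the joint variable $(X,Y)$, the first step is to note that $X$ is a deterministic function of $W$, hence $H(X|W)=0$ and the combined object $(X,W)$ has the same law as $(X,Y)$. Therefore
\be
   H(X,W)=H(X,Y)=H(W),\qquad I(X,W)=H(X)+H(W)-H(X,W)=H(X),
\ee
and consequently $d(X,W)=H(X,W)-I(X,W)=H(X,Y)-H(X)=H(Y|X)$. (In the algorithmic reading the same two identities hold up to the ubiquitous $O(\log N)$ terms: $K(XW)=K(XXY)\simeq K(XY)=K(W)$ by idempotency and monotonicity, so $I_{\rm alg}(X,W)\simeq K(X)$ and $d(X,W)\simeq K(XY)-K(X)$.)

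Second, I would invoke symmetry --- $H((X,Y))=H((Y,X))$, resp.\ $K(XY)=K(YX)$ --- to run the identical computation with $X$ and $Y$ interchanged, giving $I(Y,W)=H(Y)$ and $d(Y,W)=H(X,Y)-H(Y)=H(X|Y)$. The point worth isolating is that all three pairs $(X,W)$, $(Y,W)$ and $(X,Y)$ are normalized by the \emph{same} total entropy $H(X,W)=H(Y,W)=H(W)=H(X,Y)$; this coincidence of denominators is exactly what lets both claimed relations hold with no slack.

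Finally I would substitute into the definition (\ref{eq:dist}) of $D$: since $H(X,W)=H(Y,W)=H(X,Y)$ one gets $D(X,W)=d(X,W)/H(X,W)=H(Y|X)/H(X,Y)$ and $D(Y,W)=H(X|Y)/H(X,Y)$, whence $D(X,W)+D(W,Y)=\bigl(H(Y|X)+H(X|Y)\bigr)/H(X,Y)=d(X,Y)/H(X,Y)=D(X,Y)$ and, the common denominator cancelling, $D(X,W):D(Y,W)=H(Y|X):H(X|Y)$. I do not expect a genuine obstacle: the only step needing care is the identity $H(X,W)=H(W)$ (equivalently $K(XW)\simeq K(W)$), which in the Shannon case is the triviality that $X$ is a function of $W$ but in the algorithmic case must be extracted from the idempotency property together with the usual logarithmic-correction caveat; one should also recall that $D$ was already shown to be a bona fide metric in Theorem~1, so that ``mid point'' and the nonnegativity of $D(X,W)$ and $D(W,Y)$ are meaningful.
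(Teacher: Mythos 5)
Your proposal is correct and follows essentially the same route as the paper's proof: both rest on the observations $H(X|W)=0$, hence $I(X,W)=H(X)$, and $H(X,W)=H(X,Y)$, followed by direct substitution into the definition of $D$. You additionally write out the ratio identity and the algorithmic caveats explicitly, which the paper leaves as ``proven similarly by straightforward calculation.''
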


\begin{proof} We present the proof in its Shannon version. The algorithmic version is basically the same, 
if one neglects the standard logarithmic correction terms.

Since $H(X|W) = 0$, one has $I(X,W) = H(X)$. Similarly, $H(X,W) = H(X,Y)$. Thus 
\be
   D(X,W) + D(W,Y) = 2 - {H(X)+H(Y)\over H(X,Y)} = 1 - {I(X,Y)\over H(X,Y)} = D(X,Y),
\ee
which proofs the first part. The second part is proven similarly by straightforward calculation.
\end{proof}

For $D'$ one has only the inequalities $D'(X,Y) \leq D'(X,W)+D'(W,Y)\leq D(X,Y)$.

Theorem 3 provides a strong argument for using $D$ in MIC, instead of $D'$. This does not mean that 
$D$ is {\it always} preferable to $D'$. Indeed, we will see in the next section that MIC is not always 
the most natural clustering algorithm, but that depends very much on the application one has in mind. Anyhow, 
we found numerically that in all cases $D$ gave at least comparable results as $D'$.

A major difficulty appears in the Shannon framework, if we deal with continuous random variables. As we
mentioned above, Shannon informations are only finite for coarse-grained variables, while they diverge if the
resolution tends to zero. This means that dividing MI by the entropy as in the definitions of $D$ and $D'$
becomes problematic. One has essentially two alternative possibilities. The first is to actually introduce some
coarse-graining, although it would not have been necessary for the definition of $I(X,Y)$, and divide by the
coarse-grained entropies. This introduces an arbitrariness, since the scale $\Delta$ is completely ad hoc,
unless it can be fixed by some independent arguments. We have found no such arguments, and thus we propose the
second alternative. There we take $\Delta \to 0$. In this case $H(X) \sim m_x \log \Delta$, with $m_x$ being the
dimension of $X$. In this limit $D$ and $D'$ would tend to 1. But using similarity measures
\bea
   S(X,Y) = (1-D(X,Y))\log(1/\Delta),          \\
   S'(X,Y) = (1-D'(X,Y))\log(1/\Delta)
\eea
instead of $D$ and $D'$ gives {\it exactly} the same results in MIC, and
\be
   S(X,Y) = \frac{I(X,Y)}{m_x+m_y}, \quad S'(X,Y) = \frac{I(X,Y)}{\max\{m_x,m_y\}}.
                                         \label{S}
\ee
Thus, when dealing with continuous variables, we divide the MI either by the sum or by the maximum of the
dimensions. When starting with scalar variables and when $X$ is a cluster variable obtained by joining $m$
elementary variables, then its dimension is just $m_x=m$.

\subsection{Properties and Presentation of MIC Trees}

MIC gives rooted trees: The leaves are original sequences/variables $X,\ldots, Z$, internal nodes correspond 
to subsets of the set of all leaves, and the root represents the total set of all leaves, i.e. the joint variable 
$(X\ldots Z)$. A bad choice of the metric and/or of the clustering algorithm will in general manifest itself 
in long ``caterpillar-like" branches, while good choices will tend towards more balanced branchings. 

When presenting the tree, it is useful to put all leaves on the $x$-axis, and to use information about 
the distances/similarities to control the height. We have essentially two natural choices, illustrated 
in Figs.~\ref{fig:phylotree} and \ref{fig:ICAECG}, respectively. In Fig.~\ref{fig:ICAECG}, the height of 
a node is given by the mutual information between all leaves in the subtree below it. If the node is a leave,
its height is zero. If it is an internal node (including the root) corresponding to a subset ${\cal S}$ of 
leaves, then its height is given by the mutual information between all members of ${\cal S}$,
\be
    height({\cal S}) = I({\cal S}) \qquad {\rm method 1}.
\ee
Let us assume that ${\cal S}$ has the two daughters $X$ and $Y$, i.e. ${\cal S} = (XY)$. X and Y themselves might 
be either leaves or also internal nodes. Then the grouping property implies that 
$ height({\cal S}) - height(X) = I({\cal S})-I(X)$ is the MI between $X$ and all the other sequences/variables 
in ${\cal S}$ which are not contained in $X$. This is non-negative by the positivity of MI, and the same is true 
when $X$ is exchanged with $Y$. Thus the trees 
drawn in this way are always {\it well formed} in the sense that a mother node is located above its daughters.
Any violation of this rule must be due to imprecise estimation of MI's.

A drawback of this method of representing the tree is that very close pairs have long branches, while relatively
distant pairs are joined by short branches. This is the opposite of what is usually done in phylogenetic trees, 
where the branch lengths are representative of the distance between a mother and its daughter. This can be achieved 
by using the alternative representation employed in Fig.~\ref{fig:phylotree}. There, the height of a mother node $W$
which has two daughters $X$ and $Y$ is given by 
\be
    height(W) = D(X,Y)          \qquad {\rm method 2}.
\ee
Although it gives rise to more intuitive trees, it also has one disadvantage: It is no longer guaranteed that the 
tree is well formed, but it may happen that $height(W) < height(X)$. To see this, consider a tree formed by three
variables $X,Y$, and $Z$ which are pairwise independent but globally dependent: $I(X,Y)=I(X,Z)=I(Y,Z)=0$ but 
$I(X,Y,Z)>0$\footnote{An example is provided by three binary random variables with 
$p_{000}=p_{011}=p_{101}=p_{110}=1/2+\epsilon$ and $p_{001}=p_{010}=p_{100}=p_{111}=1/2-\epsilon$.}. 
In this case, all pairwise distances are maximal, thus also the 
first pair to be joined has distance 1. But the height of the root is less than 1. In our numerical applications
we indeed found occasionally such ``glitches", but they were rare and were usually related to imprecise 
estimates of MI or to improper choices of the metric.

\section{Mitochondrial DNA and a Phylogenetic Tree for Mammals}
\label{sec:phylo}

As a first application, we study the mitochondrial DNA of a group of 34 mammals (see Fig.~\ref{fig:phylotree}). 
Exactly the same species \cite{Genebank} had previously been analyzed in \cite{li01,reyes00,kraskov03}. 
This group includes non-eutherians\footnote{opossum (\textit{Didelphis virginiana}), 
wallaroo (\textit{Macropus robustus}), and platypus (\textit{Ornithorhyncus anatinus})}, 
rodents and lagomorphs\footnote{rabbit (\textit{Oryctolagus cuniculus}), guinea pig
(\textit{Cavia porcellus}), fat dormouse (\textit{Myoxus glis}), rat (\textit{Rattus norvegicus}), squirrel
(Sciurus vulgaris), and mouse (\textit{Mus musculus})}, ferungulates\footnote{horse (\textit{Equus caballus}),
donkey (\textit{Equus asinus}), Indian rhinoceros (\textit{Rhinoceros unicornis}), white rhinoceros
(\textit{Ceratotherium simum}), harbor seal (\textit{Phoca vitulina}), grey seal (\textit{Halichoerus grypus}),
cat (\textit{Felis catus}), dog (\textit{Canis lupus familiaris}), fin whale (\textit{Balaenoptera physalus}), blue
whale (\textit{Balenoptera musculus}), cow (\textit{Bos taurus}), sheep (\textit{Ovis aries}), pig (\textit{Sus
scrofa}) and hippopotamus (\textit{Hippopotamus amphibius})}, primates\footnote{human (\textit{Homo sapiens}), 
common chimpanzee
(\textit{Pan troglodytes}), pigmy chimpanzee (\textit{Pan paniscus}), gorilla (\textit{Gorilla gorilla}),
orangutan (\textit{Pongo pygmaeus}), gibbon (\textit{Hylobates lar}), and baboon (\textit{Papio hamadryas})},
members of the African clade\footnote{African elephant (\textit{Loxodonta africana}), aardvark (\textit{Orycteropus afer})},
and others\footnote{Jamaican fruit bat (\textit{Artibeus jamaicensis}), armadillo (\textit{Dasypus novemcintus})}. It
had been chosen in \cite{li01} because of doubts about the relative closeness among these three groups
\cite{cao98,reyes00}.

Obviously, we are here dealing with the algorithmic version of information theory, and informations are
estimated by lossless data compression. For constructing the proximity matrix between individual taxa, we
proceed essentially a in Ref.~\cite{li01}. But in addition to using the special compression program GenCompress
\cite{GenComp}, we also tested several general purpose compression programs such as BWTzip, the
UNIX tool bzip2, and lpaq1 \cite{lpaq}, and durilca \cite{lpaq}. Finally, we also tested XM \cite{XM} (the 
abbreviation stands for ``expert model"), which is according to its authors the most efficient compressor 
for biological (DNA, proteins) sequences. Indeed, we found that XM was even better than expected. While 
the advantage over GenCompress and other compressors was a few per cent when applied to 
single sequences \cite{XMresults}, the estimates for MI between not too closely related species were higher 
by up to an order of magnitude. This is possible mainly because the MI's estimated by means of GenCompress
and similar methods are reliably positive (unless the species are from different phyla) but extremely small. 
Thus even a small improvement on the compression rate can make a big change in MI.

In Ref.~\cite{li01}, the proximity matrix derived from MI estimates was then used as the input to a standard HC algorithm
(neighbor-joining and hypercleaning) to produce an evolutionary tree. It is here where our treatment deviates
crucially. We used the MIC algorithm described in Sec.~\ref{sec:intro}, with distance $D(X,Y)$. The joining of two clusters
(the third step in the MIC algorithm) is obtained by simply concatenating the DNA sequences. There is of course
an arbitrariness in the order of concatenation sequences: $XY$ and $YX$ give in general compressed sequences of
different lengths. But we found this to have negligible effect on the evolutionary tree. The resulting
evolutionary tree obtained with Gencompress is shown in Fig.~\ref{fig:phylotree}, while the tree obtained 
with XM is shown in Fig.~\ref{fig:XMtree}.

\begin{figure}
  \begin{center}
   \psfig{file=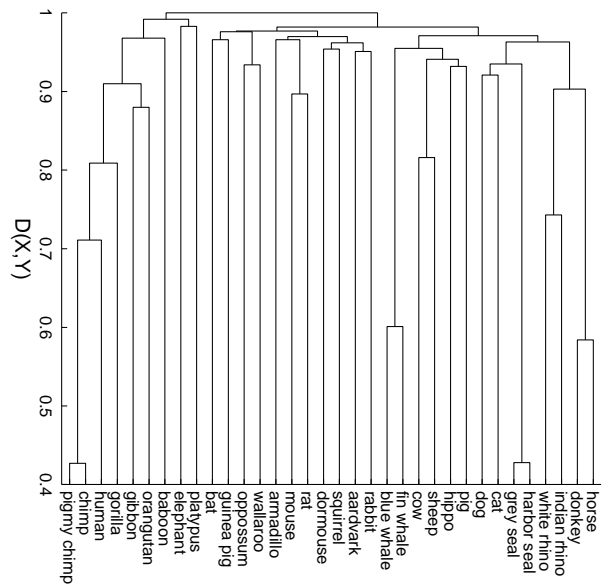,height=80mm,angle=270}
    \caption{Phylogenetic tree for 34 mammals (31 eutherians plus 3 non-placenta mammals), with mutual 
       informations estimated by means of GenCompress.
       In contrast to Fig.~\ref{fig:ClustECG}, the heights of nodes are here and in the following tree 
       equal to the distances between the joining daughter clusters.}
    \label{fig:phylotree}
\end{center}
\vspace{-8mm}
\end{figure}

\begin{figure}
  \begin{center}
   \psfig{file=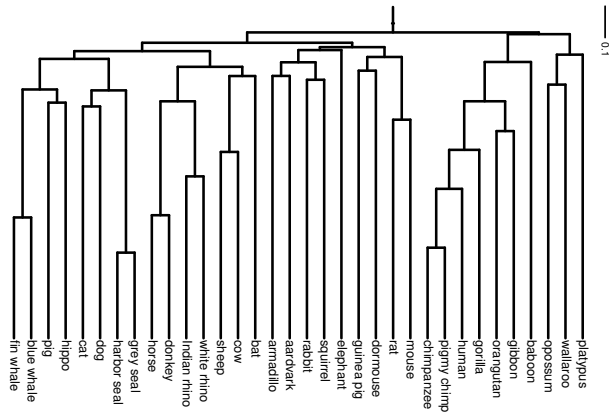,height=80mm,angle=270}
    \caption{Same as in Fig.~\ref{fig:phylotree}, but with mutual informations estimated by means of XM.
    Notice that the x-axis covers here, in contrast to Fig.~\ref{fig:phylotree}, the entire interval 
    from 0 to 1.}
    \label{fig:XMtree}
\end{center}
\vspace{-8mm}
\end{figure}

Both trees are quite similar, and they are also similar to the most widely accepted phylogeny found e.g. 
in \cite{Genebank}. All primates are e.g. correctly clustered and the ferungulates are joined together. 
There are however a number connections (in both trees) which obviously do not reflect the true evolutionary
tree. , 
As shown in Fig.~\ref{fig:phylotree} the overall structure of this tree closely resembles the one shown in
Ref.~\cite{reyes00}. All primates are correctly clustered and also the relative order of the ferungulates is in
accordance with Ref.~\cite{reyes00}. On the other hand, there are a number of connections which obviously do not
reflect the true evolutionary tree, see for example the guinea pig with bat and elephant with platypus in 
Fig.~\ref{fig:phylotree}, and the mixing of rodents with African clade and armadillo in Fig.~\ref{fig:XMtree}. But 
all these wrong associations are between species which have a very large distance from each other and from
any other species within this sample. All in all, the results shown in Figs.~\ref{fig:phylotree} and 
\ref{fig:phylotree} are in surprisingly good agreement (showing that neither compression scheme has obvious 
faults) and capture surprisingly well the known relationships between mammals. Dividing MI by the total information
is essential for this success. If we had used the non-normalized $I_{\rm alg}(X,Y)$ itself, results obtained 
in \cite{li01} would not change much, since all 34 DNA sequences have roughly the same length. 
But our MIC algorithm would be completely screwed up:
After the first cluster formation, we have DNA sequences of very different lengths, and longer sequences tend
also to have larger MI, even if they are not closely related.

One recurrent theme in the discussion of mammalian phylogenetic trees is the placement of the rodents 
\cite{reyes00,reyes04}. 
Are they closer to ferungulates, or closer to primates? Our results are inconclusive. On the one hand,
the average distances between all 14 rodents and all 104 ferungulates in the Genebank data (Feb. 2008), 
estimated with XM, is 0.860 -- which is clearly smaller than the average distance 0.881 to all 28 primates. 
On the other hand, the distances within the group of rodents are very large, suggesting that this group is 
either not monophylic, or that its mtDNA has evolved much faster than, say, that of ferungulates. Either 
possibility makes a statement about the classification of rodents with respect to ferungulates and 
primates based on these data very uncertain.

A heuristic reasoning for the use of MIC for the reconstruction of an evolutionary tree might be given as
follows: Suppose that a proximity matrix has been calculated for a set of DNA sequences and the smallest
distance is found for the pair $(X,Y)$. Ideally, one would remove the sequences $X$ and $Y$, replace them by the
sequence of the common ancestor (say $Z$) of the two species, update the proximity matrix to find the smallest
entry in the reduced set of species, and so on. But the DNA sequence of the common ancestor is not available.
One solution might be that one tries to reconstruct it by making some compromise between the sequences $X$ and
$Y$. Instead, we essentially propose to concatenate the sequences $X$ and $Y$. This will of course not lead to a
plausible sequence of the common ancestor, but it will {\it optimally represent the information} about the
common ancestor. During the evolution since the time of the ancestor $Z$, some parts of its genome might have
changed both in $X$ and in $Y$. These parts are of little use in constructing any phylogenetic tree. Other parts
might not have changed in either. They are recognized anyhow by any sensible algorithm. Finally, some parts of
its genome will have mutated significantly in $X$ but not in $Y$, and vice versa. This information is essential
to find the correct way through higher hierarchy levels of the evolutionary tree, and it is preserved in
concatenating.

In any case, this discussion shows that our clustering algorithm produces trees which is closer in spirit to 
phenetic trees than to phylogenetic trees proper. As we said, in the latter the internal nodes 
should represent actual extinct species, namely the last common ancestors. In our method, in contrast,
an internal node does not represent a particular species but a higher order clade which is defined solely 
on the basis of information about presently living species. In the phylogenetic context, purely phenetic
trees are at present much less popular than trees using evolutionary information. This is not so in the 
following application, where no evolutionary aspect exists and the above discussion is irrelevant.

\section{Clustering of Minimally Dependent Components in an Electrocardiogram}
\label{sec:ecg}
As our second application we choose a case where Shannon theory is the proper setting. We show in Fig.~\ref{fig:ECGdata} an ECG
recorded from the abdomen and thorax of a pregnant woman \ref{fig:ECGdata} (8 channels, sampling rate 500 Hz,
5$\,$s total). It is already seen from this graph that there are at least two important components in this ECG:
the heartbeat of the mother, with a frequency of $\approx 3$ beat/s, and the heartbeat of the fetus with roughly
twice this frequency. Both are not synchronized. In addition there is noise from various sources (muscle
activity, measurement noise, etc.). While it is easy to detect anomalies in the mother's ECG from such a
recording, it would be difficult to detect them in the fetal ECG.

As a first approximation we can assume that the total ECG is a linear superposition of several independent
sources (mother, child, noise$_1$, noise$_2$,...). A standard method to disentangle such superpositions is {\it
independent component analysis} (ICA) \cite{ICA}. In the simplest case one has $n$ independent sources
$s_i(t),\; i=1\ldots n$ and $n$ measured channels $x_i(t)$ obtained by instantaneous superpositions with a time
independent non-singular matrix ${\bf A}$,
\be
   x_i(t) = \sum_{j=1}^n A_{ij} s_j(t)\;.
\ee
In this case the sources can be reconstructed by applying the inverse transformation ${\bf W} = {\bf A}^{-1}$
which is obtained by minimizing the (estimated) mutual informations between the transformed components $y_i(t) =
\sum_{j=1}^n W_{ij} x_j(t)$. If some of the sources are Gaussian, this leads to ambiguities \cite{ICA}, but it
gives a unique solution if the sources have more structure.

In reality things are not so simple. For instance, the sources might not be independent, the number of sources
(including noise sources!) might be different from the number of channels, and the mixing might involve delays.
For the present case this implies that the heartbeat of the mother is seen in several reconstructed components
$y_i$, and that the supposedly ``independent" components are not independent at all. In particular, all
components $y_i$ which have large contributions from the mother form a cluster with large intra-cluster MIs and
small inter-cluster MIs. The same is true for the fetal ECG, albeit less pronounced.
It is thus our aim to \\
1) optimally decompose the signals into least dependent components;\\
2) cluster these components hierarchically such that the most dependent ones are
grouped together;\\
3) decide on an optimal level of the hierarchy, such that the clusters make most sense
physiologically;\\
4) project onto these clusters and apply the inverse transformations to obtain cleaned signals for the sources
of interest.

\begin{figure}
  \begin{center}
    \psfig{file=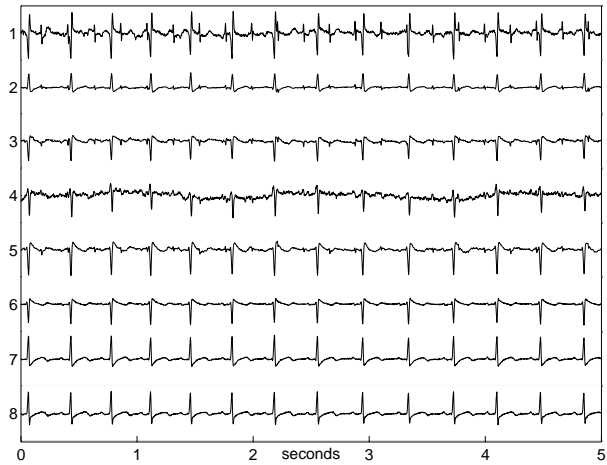,width=80mm}
    \caption{ECG of a pregnant woman.}
    \label{fig:ECGdata}
\end{center}
\vspace{-8mm}
\end{figure}
\begin{figure}
  \begin{center}
    \psfig{file=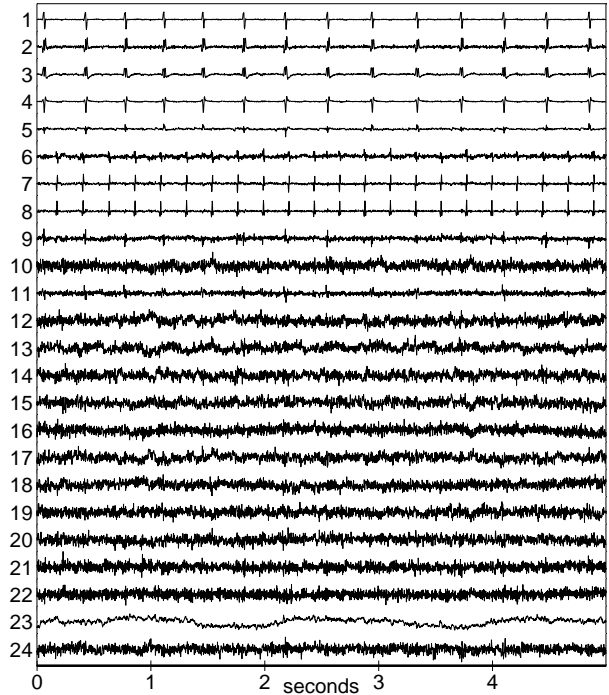,width=80mm}
    \caption{Least dependent components of the ECG shown in Fig.~\ref{fig:ECGdata}, after increasing
     the number of channels by delay embedding.}
    \label{fig:ICAECG}
\end{center}
\vspace{-8mm}
\end{figure}

Technically we proceeded as follows \cite{stoegbauer04}:

Since we expect different delays in the different channels, we first used Takens delay embedding \cite{takens80}
with time delay 0.002$\,$s and embedding dimension 3, resulting in $24$ channels. We then formed 24 linear
combinations $y_i(t)$ and determined the de-mixing coefficients $W_{ij}$ by minimizing the overall mutual
information between them, using the MI estimator proposed in \cite{kraskov04}. There, two classes of estimators were
introduced, one with square and the other with rectangular neighborhoods. Within each class, one can use the
number of neighbors, called $k$ in the following, on which the estimate is based. Small values of $k$ lead to a
small bias but to large statistical errors, while the opposite is true for large $k$. But even for very large
$k$ the bias is zero when the true MI is zero, and it is systematically such that absolute values of the MI are
underestimated. Therefore this bias does not affect the determination of the optimal de-mixing matrix. But it
depends on the dimension of the random variables, therefore large values of $k$ are not suitable for the
clustering. We thus proceeded as follows: We first used $k=100$ and square neighborhoods to obtain the least
dependent components $y_i(t)$, and then used $k=3$ with rectangular neighborhoods for the clustering. The
resulting least dependent components are shown in Fig.~\ref{fig:ICAECG}. They are sorted such that the first
components (1 - 5) are dominated by the maternal ECG, while the next three contain large contributions from the
fetus. The rest contains mostly noise, although some seem to be still mixed.

These results obtained by visual inspection are fully supported by the cluster analysis. The dendrogram is shown
in Fig.~\ref{fig:ClustECG}. In constructing it we used $S(X,Y)$ (Eq.(\ref{S})) as similarity measure to find the
correct topology. Again we would have obtained much worse results if we had not normalized it by dividing MI by
$m_X+m_Y$. In plotting the actual dendrogram, however, we used the MI of the cluster to determine the height at
which the two daughters join. The MI of the first five channels, e.g., is $\approx 1.43$, while that of channels
6 to 8 is $\approx 0.34$. For any two clusters (tuples) $X=X_1\ldots X_n$ and $Y=Y_1\ldots Y_m$ one has $I(X,Y)
\geq I(X)+I(Y)$. This guarantees, if the MI is estimated correctly, that the tree is drawn properly. The two
slight glitches (when clusters (1--14) and (15--18) join, and when (21--22) is joined with 23) result from small
errors in estimating MI. They do in no way effect our conclusions.
\begin{figure}
  \begin{center}
    \psfig{file=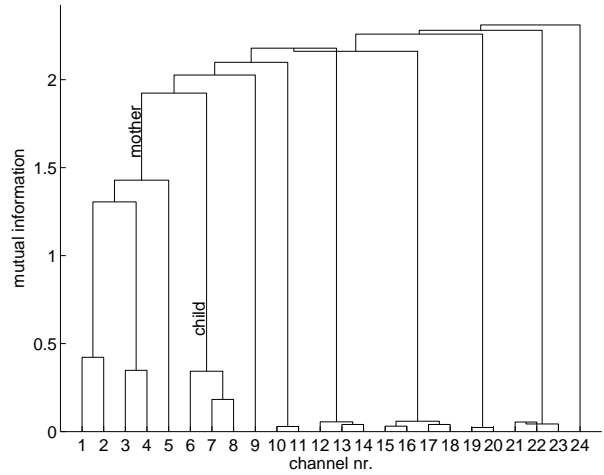,width=80mm,angle=0}
    \caption{Dendrogram for least dependent components. The height where the two branches of
    a cluster join corresponds to the MI of the cluster.}
    \label{fig:ClustECG}
\end{center}
\vspace{-8mm}
\end{figure}

In Fig.~\ref{fig:ClustECG} one can clearly see two big clusters corresponding to the mother and to the child. There
are also some small clusters which should be considered as noise. For reconstructing the mother and child
contributions to Fig.~\ref{fig:ECGdata}, we have to decide on one specific clustering from the entire hierarchy. We
decided to make the cut such that mother and child are separated. The resulting clusters are indicated in
Fig.~\ref{fig:ClustECG} and were already anticipated in sorting the channels. Reconstructing the original ECG from
the child components only, we obtain Fig.~\ref{fig:reconstruct}.

\begin{figure}
  \begin{center}
    \psfig{file=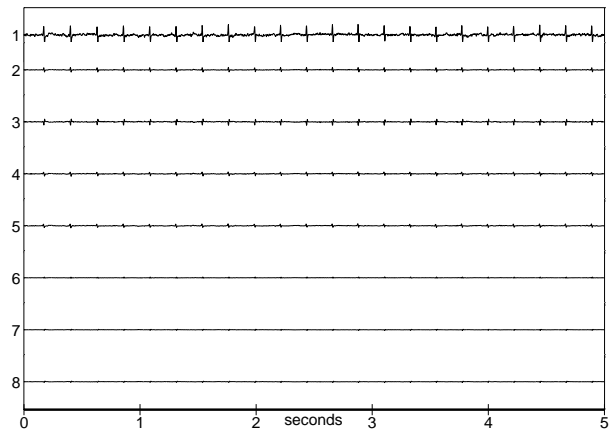,width=80mm}
    \caption{Original ECG where all contributions except those of the child cluster have
     been removed.}
    \label{fig:reconstruct}
\vspace{-8mm}
\end{center}
\end{figure}

\section{Conclusions}
\label{sec:concl}

We have shown that MI can not only be used as a proximity measure in clustering, but that it also suggests a
conceptually very simple and natural hierarchical clustering algorithm. We do not claim that this algorithm,
called {\it mutual information clustering} (MIC), is always superior to other algorithms. Indeed, MI is in
general not easy to estimate. Obviously, when only crude estimates are possible, also MIC will not give optimal
results. But as MI estimates are becoming better, also the results of MIC should improve. The present paper was
partly triggered by the development of a new class of MI estimators for continuous random variables which have
very small bias and also rather small variances \cite{kraskov04}.

We have illustrated our method with two applications, one from genetics and one from cardiology. For neither
application MIC might give the very best clustering, but it seems promising and indicative of the inherit 
simplicity of our method that one common method gives decent results in two very different applications.

If better data become available, e.g. in the form of longer time sequences in the application to ECG or of 
more complete genomes (so that mutual information can be estimated more reliably), then the results of MIC 
should improve. It is not obvious what to expect when one wants to include more data in order to estimate
larger trees. On the one hand, more species within one taxonomic clade would describe this clade more 
precisely, so results should improve. On the other hand, as clusters become bigger and bigger, also the
disparities of the sequences lengths which describe these clusters increase. It is not clear whether 
in this case a strict normalization of the distances as in E's.~(\ref{eq:dist},\ref{eq:dist2}) is still appropriate, 
and whether the available compression algorithms will still be able to catch the very long resulting 
correlations within the concatenated sequences. Experiments with phylogenetic trees of animals with up to 
360 species (unpublished results) had mixed success. 

As we said in the introduction, after a construction of a first tree one can try to improve on it. One 
possibility is to change the topology of the tree, using e.g. quartet moves and accepting them based on
some heuristic cost function. One such cost function could be the sum of all distances between linked 
nodes in the tree. Alternatively, one could try to keep the topology fixed and change the sequences 
representing the internal nodes, i.e. deviate from simple concatenation. We have not tried either.

There are two versions of information theory, algorithmic and probabilistic, and therefore there are also two
variants of MI and of MIC. We discussed in detail one application of each, and showed that indeed common
concepts were involved in both. In particular it was crucial to normalize MI properly, so that it is essentially
the {\it relative} MI which is used as proximity measure. For conventional clustering algorithms using
algorithmic MI as proximity measure this had already been stressed in \cite{li01,li04}, but it is even more
important in MIC, both in the algorithmic and in the probabilistic versions.

In the probabilistic version, one studies the clustering of probability distributions. But usually distributions
are not provided as such, but are given implicitly by finite random samples drawn (more or less) independently
from them. On the other hand, the full power of algorithmic information theory is only reached for infinitely
long sequences, and in this limit any individual sequence defines a sequence of probability measures on finite
subsequences. Thus the strict distinction between the two theories is somewhat blurred in practice.
Nevertheless, one should not confuse the similarity between two sequences (two English books, say) and that
between their subsequence statistics. Two sequences are maximally different if they are completely random, but
their statistics for short subsequences is then identical (all subsequences appear in both with equal
probabilities). Thus one should always be aware of what similarities or independencies one is looking for. The
fact that MI can be used in similar ways for all these problems is not trivial.

\begin{acknowledgement}
This work was done originally in collaboration with H. St\"ogbauer and R.G. Andrzejak. We are very much 
indebted to them for their contributions. We also would like to thank Arndt von Haeseler, Walter Nadler, Volker Roth, 
Orion Penner and Maya Pczuski for many useful and critical discussions.
\end{acknowledgement}
\bibliographystyle{spmpsci}
\bibliography{chapter}
\end{document}